\newextarrow{\xbigtoto}{{20}{20}{20}{20}}
   {\bigRelbar\bigRelbar{\bigtwoarrowsleft\rightarrow\rightarrow}}
\newextarrow{\xbigto}{{20}{20}}
   {\bigRelbar{\bigtwoarrowsleft\rightarrow}}
\newtheorem{theorem}{Theorem}
\newtheorem{definition}[theorem]{Definition}
\newtheorem{example}{Example}
\begin{document}


\title[Differential Geometry of Contextuality]{Differential Geometry of Contextuality}

\author{Sidiney B. Montanhano}
\email[]{s226010@dac.unicamp.br}
\affiliation{Instituto de Matemática, Estatística e Computação Científica, Universidade Estadual de Campinas, 13083-859, Campinas, São Paulo, Brazil}

\date{\today}

\begin{abstract}
Contextuality has long been associated with topological properties. In this work, such a relationship is elevated to identification in the broader framework of generalized contextuality. We employ the usual identification of states, effects, and transformations as vectors in a vector space and encode them into a tangent space, rendering the noncontextual conditions the generic condition that discrete closed paths imply null phases in valuations, which are immediately extended to the continuous case. Contextual behavior admits two equivalent interpretations in this formalism. In the geometric or intrinsic-realistic view, termed ``Schrödinger'', flat space is imposed, leading to contextual behavior being expressed as non-trivial holonomy of probabilistic functions, analogous to the electromagnetic tensor. As a modification of the valuation function, we use the equivalent curvature to connect contextuality with interference, noncommutativity, and signed measures. In the topological or participatory-realistic view, termed ``Heisenberg'', valuation functions must satisfy classical measure axioms, resulting in contextual behavior needing to be expressed in topological defects in events, resulting in non-trivial monodromy. We utilize such defects to connect contextuality with non-embeddability and to construct a generalized Vorob'ev theorem, a result regarding the inevitability of noncontextuality. We identify in this formalism the contextual fraction for models with outcome-determinism, and a pathway to address disturbance in ontological models as non-trivial transition maps. We also discuss how the two views for encoding contextuality relate to interpretations of quantum theory.
\end{abstract}

\maketitle

\tableofcontents

\section{Introduction}
\label{1}

Contextuality, the ultimate form of non-classicality, has many diverse mathematical approaches. Each approach was built for a specific purpose or strategy to exploit its characteristics, and some were developed long before being identified as contextuality. They all start from the encoding of physical systems into some mathematical structure that cannot be represented by another structure called classical. There are already examples of topological representations \cite{Abramsky_2011,Abramsky_2012,abramsky_et_al:LIPIcs:2015:5416,okay2017topological,2020Okay,Sidiney_2021}, and algebraic representations \cite{BirkhoffvonNeumann,PhysicsPhysiqueFizika.1.195,Kochen1975,Gleason}. Some have evolved into a geometrical representation due to the relationship between inequalities and convex sets \cite{Cabello_2014,amaral2017geometrical}, and others seek a foundation in measure theory \cite{Dzhafarov2015ContextualitybyDefaultAB}. Other notions \cite{1994SORKIN,Spekkens2008,Schmid2021} are known to be related to standard contextuality, and they are more or less explored in the literature.

Non-classicality has an incredible number of applications, and more are being presented each day. Contextuality is the fuel for this revolution. It is known that contextuality is the origin of quantum behavior \cite{doring2020contextuality}, and it is the generalization of the famous notion of nonlocality \cite{Abramsky_2011}. It is necessary for any computational advantage over classical computers \cite{shahandeh2021quantum}, and it is explicitly the ``magic'' required for some types of quantum computers \cite{Howard2014}. But this phenomenon is not just a resource for technological applications. Understanding contextuality in a more general formulation is essential to understand why and how we live in a quantum reality, and whether we need to search for more general theories than quantum theory itself. This fundamental exploration ultimately aims to establish the framework where future theories and technologies will be built.

In this work, we will explore the geometric or topological origins of contextuality. We will use its more generic version, generalized contextuality \cite{Spekkens2005}, restricting as necessary to treat tools from other approaches. The general strategy will be to rethink the operational equivalences of the ontic representation as loops with discrete parts in the tangent space of a suitable manifold, usually piecewise linear given by the elements of a process (for example, an extension of the set of effects). The noncontextual condition becomes the preservation of probabilistic valuation maps for these loops, thus, without the presence of non-trivial phases.

A relevant ontic representation will be the one present in generalized probability theories \cite{Janotta2014}. Generally, generalized probability theories are constructed with a finite set of extremal effects, and demonstrations of contextual behaviors use them. But there are theories where the set of such effects is infinite and, like quantum theory, even continuous. This implies the possibility of different types of operational equivalences. The framework we build where all operational equivalences are explored on equal footing. The condition of contextual behavior, described through discrete differential geometry, ends up generalizing to the domain of differential geometry for the continuous case, again as non-trivial phases of probabilistic valuation maps.

With this framework, we present two ways to interpret contextuality, depending on the choice of how to encode the physical system. The violation of phase triviality in valuation results from holonomy or monodromy, respectively, linked to the intrinsic-realistic and participatory-realistic views of the theory. Holonomy follows from a geometric cause, by imposing that the ontological set be classically complete, which imposes on valuations the correction of contextual behavior, expressed by a curvature term. Monodromy follows from a topological cause, by imposing that the valuation must be classical in the sense of measure theory, expressed by topological defects in the ontological set. They are equivalent and have dual interpretations of the fact that we have lost classicality.

We will use these two equivalent interpretations to explore the relationship between contextual behavior and different notions of non-classicality. Using the geometric version, called ``Schrödinger'', we start by imposing outcome determinism, making the contextual fraction, a quantifier of contextuality \cite{Abramsky_2017}, to be defined in its usual form. We show that the resource it is quantifying is nothing more than the curvature term that corrects valuation. The same occurs with the interference term in the sense of quantum measure theory \cite{1994SORKIN,sorkin1995quantum}, where the interference part that cannot be explained classically also arises from the curvature term. The need for signed measures for valuation maps \cite{Spekkens2008,Abramsky_2011} follows analogously as we force the corrected valuation to be treated in the same way as a classical measure, even though we have already imposed the non-existence of topological defects. Using the topological version, called ``Heisenberg'', we identify it as the cause of the impossibility of incorporation into a classical mathematical structure \cite{Schmid2021,schmid2020structure}. The topological view also offers a generalization of the famous Vorob'ev theorem \cite{Vorobyev_1962}, which characterizes the inevitability of noncontextuality, while the geometric view establishes a relationship between transition maps and disturbance \cite{Sidiney_2021,Amaral_2019}. They also allow for a clearer understanding of how quantum interpretations deal with contextuality, in a way mapping the madness of interpretations \cite{Cabello_2017}.

This work is divided as follows. In section \ref{2}, there is a presentation of generalized probability theories (\ref{2.1}), the notion of ontic representation, and the valuation functions for each part of a usual process (\ref{2.2}), as well as the conditions of noncontextuality for each of these parts (\ref{2.3}). A brief presentation of differential geometry (\ref{3.1}) and discrete differential geometry (\ref{3.2}) is given in section \ref{3}. In section \ref{4}, contextuality is identified as a phase of the valuation functions. Beginning with the formal identification (\ref{4.1}), we construct its two interpretations. In the geometric view (\ref{4.2}), we have the notion of contextual curvature for a generalized probability theory. In the topological view (\ref{4.3}), we have the relationship between contextuality and non-trivial topology. As equivalent interpretations, they can be translated into each other (\ref{4.4}). The tour of already known non-classical phenomena and their relations with contextuality begins in section \ref{5}. We start with the identification of contextual fraction (\ref{5.1}), the interference in quantum measure theory (\ref{5.2}), and with the special example of quantum theory and its dependence on Planck's constant. We then explore the necessity of signed measures and the impossibility of embedding the process in a classical mathematical object (\ref{5.3}), and generalize the Vorob'ev theorem with the topological view (\ref{5.4}). The zeroth order process, given by disturbance, is explained in a generalized framework for generalized probability theories, adding a new term to the valuation decomposition (\ref{5.5}). We finalize this section with an exposition of the main interpretations of quantum theory and its relation to the interpretations of the origin of contextuality (\ref{5.6}). Section \ref{6} follows, where we discuss the results and potential future avenues.

\section{Generalized Contextuality and Generalized Probability Theories}
\label{2}

Let's review here some concepts regarding generalized probability theories. Such theories will be important for formalizing sets of operational objects that are accessible for probabilistic valuation. They will serve as a substrate for subsequent analyses.

The objective of operational probability theories is to provide an operational description of physical theories, serving as an initial construction for their purely operational depiction \cite{schmid2020structure}. We will work with standard probabilities and with only one system, so it is unnecessary to define how to combine systems together. This restriction categorizes operational probability theories as generalized probability theories, or GPTs.

\subsection{Generalized Probability Theories}
\label{2.1}

A GPT can be described as a category. A category is given by a class of objects and a class of morphisms between each pair of objects, with one as the source and the other as the target, where there exists composition of morphisms, and such composition is associative. In the case of GPTs, the morphisms define operations that represent physical operations between the objects, and we have a trivial object that serves as a fundamental element for the construction of processes and their valuation. For a presentation of category theory aimed at applications in quantum theory, see Refs. \cite{coecke2008introducing, Coecke2011}.

The operations from the trivial object, which we denote by $\bot$, to any other object are called states, denoted generally by $P$ or categorically by $f\leftarrow \bot$, and we will denote them as $\ket{P}$ for reasons that will become clear later on. The set of operations from objects other than the trivial one to it are called effects, denoted generally by $E$ or categorically by $\bot\leftarrow f$, and we will denote them as $\bra{E}$.

The other morphisms between two non-trivial objects can be understood as representations of transformations. Any transformation $f\leftarrow g$ can be thought of as a transformation that takes the state $f\leftarrow\bot$ to the state $g\leftarrow\bot$, which we can denote as $T\ket{P_f}=\ket{P_g}$, with $T$ denoting the transformation in question. The same reasoning applies analogously to effects. This identifies physically interesting transformations as functions of the set of states to itself, or equivalently as functions of the set of effects to itself.

The automorphisms of the trivial object possess a structure of scalars, usually taken as a semiring or semifield, such as the Boolean semiring $\mathbb{B}$ given by ${0,1}$, and the probabilistic semifield $\mathbb{R}^+$ given by $[0,1]$. Here, we will focus on the probabilistic semifield $\mathbb{R}^+$. The sets of states $\mathcal{P}$, transformations $\mathcal{T}$, and effects $\mathcal{E}$ provide us with the probabilities of a process in the system they represent through a function:
\begin{equation}
p:\mathcal{P}\times\mathcal{T}\times\mathcal{E}\to[0,1]::(P,T,E)\mapsto p(E|T,P)
\end{equation}
interpreted as the probability of obtaining the outcome $E$ when starting with a state $P$ that underwent a transformation $T$. The composition of operations used in the argument of the function $p(E|T,P)$ can be understood as a path from the trivial object to itself, passing through the operations that define $P$ first, then the operations that define $T$, and finally the operation of $E$, closing the loop in the category of operations. This identifies $p$ as a function of loops passing through $\bot$ to the set of scalars, in our case, the probabilistic one.

Usually, we can use the bracket notation that we have already introduced to write
\begin{equation}
p(E|T,P)=\bra{E}T\ket{P},
\end{equation}
indicating that this is an identification with a loop of processes and in analogy with quantum theory and linear algebra. To turn this analogy into an identification, one imposes that the function $p$ preserves mixtures of operations, i.e., the convex combination with scalar coefficients. From the preservation of mixtures, one can extend it to obtain linearity of $p$. With linearity, we can represent states and effects in a vector space, and transformations as linear maps acting on them such that they preserve the sets of states and effects. The identification of the bracket notation occurs with the identification of $p$ with an inner product in this vector space. 

\begin{example}
    Let's explicitly illustrate, as an example, the GPT structure of a qubit, the two-dimensional quantum system. Naturally, we can begin with the Bloch sphere representation, where a state $\rho$ is encoded as a vector $\vec{a}$ on a unit sphere in $\mathbb{R}^3$, thus $|\vec{a}|\leq 1$. Here, we have
\begin{equation}
\rho=\frac{1}{2}(I+\vec{a}\cdot\vec{\sigma}),
\end{equation}
where $I$ is the identity matrix and $\vec{\sigma}$ is the vector given by the Pauli matrices basis. The origin of the space can be identified with $\bot$, and every vector from the origin to a point within the sphere determines the process defining the state given by that point.

    To define the effects, let us use the fact that quantum theory satisfies the condition of strong duality \cite{PhysRevLett.108.130401}. This property allows the set of states and effects to be represented in the same way, with the difference that the process occurs in the opposite direction, from the point on the sphere to the origin, but also uniquely defining a vector to represent the effect.

    Another property of quantum theory is that its set of physical transformations is maximal, meaning that all possible transformations are physically feasible. Due to linearity, we have that a transformation will be represented by a linear map $\overline{T}$ with induced norm $|\overline{T}|\leq 1$.

    With all this representation in a space $\mathbb{R}^3$, it is natural that probabilistic valuation be given by an inner product, which indeed occurs by the Born rule. We have that the trace can be calculated as
\begin{equation}
p(E|T,P)=\Tr{eT\rho}=\frac{1}{4}(1+\vec{a}e^{\intercal}\overline{T}\vec{a}{\rho}).
\end{equation}
\end{example}

There are further intriguing mathematical intricacies regarding this construction. For a more formal exposition, see Refs. \cite{Amaral:2014mlu, Janotta2014, Muller2021, Selby2021}. For a detailed treatment of linearity, particularly consult Ref. \cite{Amaral:2014mlu}.

\subsection{Ontic Representation}
\label{2.2}


A (classical) ontic representation of a (physical) theory involves embedding its vector space representation into classical probability theory. In such a theory, a simplex is defined as a set of states, and its dual as the set of effects. The primary aim of an ontic representation is to expand upon the original theory by refining the involved variables, thus enabling an explanation of the statistics of the GPT as a sub-model of classical theory.

\begin{example}
As a second example of GPT, which is important for addressing ontic representation, we have a theory of classical probabilities. In it, the states already directly represent the probabilities. In a classical system generated by $n$ possible outcomes, we have a vector in $\mathbb{R}^n$ whose elements are nonnegative real numbers that sum to 1 as a state. In other words, the set of states $\mathcal{P}$ is given by a simplex in $\mathbb{R}^n$.

Probabilistic valuation is nothing more than the entries of the vector, and by taking the maximal set of effects, we have that the set of effects $\mathcal{E}$ will be given by the dual set of vectors. This defines a hypercube in $\mathbb{R}^n$, with vertices on each of the coordinate axes, at the origin, and on the vector with entries being $1$, which acts as the identity vector.

Transformations are also taken as the maximal set, given by linear maps that preserve probabilities. These are stochastic maps, or Markov maps.
\end{example}

We can always do an embedding into a classical GPT, and the probability of the model will be given by the chain rule
\begin{equation}
p(E_{r}|T_{t},P_{s})=\sum_{\lambda,\lambda'}\xi(E_{r}|\lambda')\Gamma(\lambda',T,\lambda)\mu(\lambda|P_{s}),
\end{equation}
the valuation functions $\xi$, $\Gamma$, $\mu$, and the set of ontic variables denoted by $\Lambda$. However, this does not guarantee classicality, as one needs to impose conditions on the valuation functions to ensure they do not violate any classical behavior.

We need to impose the independence of the measurements to which the effects belong. But note that such representation is independent of measurements, once the function $\xi$ has the outcomes as its domain, in the form of a set of effects. Thus, it is restricted to non-disturbing models\footnote{No-disturbance is defined in the intersection of contexts: if the valuation of an intersection (when one defines it) $\xi(C\cap C')$ is independent of $C$ and $C'$ for all pairs of contexts, then $\mathcal{M}$ is said to be non-disturbing. It is related to parameter-independence \cite{Brandenburger_2008,barbosa2019continuousvariable}.}, or in other words, for measurements $m$ and $n$, it holds
\begin{equation}
\xi(E_{r}|\lambda,m)=\xi(E_{r}|\lambda,n)=\xi(E_{r}|\lambda),
\end{equation}
fixing the conditions for embedding to preserve classical GPT.

At this level of generality, outcome-determinism\footnote{Outcome-determinism is defined in the valuation: the outcomes defined on the contexts where the distribution is defined can be explained in a deterministic way. It is equivalent to restricting our interest to ideal measurements, which one can easily criticize when thinking in empirical applications \cite{Spekkens2014}.} is not required, unless one wants to use factorizability as a condition for noncontextuality, as in the Sheaf Approach \cite{Wester_2018}\footnote{Outcome-determinism implies that $\xi:\mathcal{E}\times\Lambda\to\{0,1\}$, thus codifying the determinism of this valuation.}. Once the outcomes are fundamental in this framework, represented in the events, and deal with the non-classicality of all the steps in a physical process, it is natural that this is the strongest framework to construct new generalized models up to the limitation by no-disturbance.

\subsection{Generalized Contextuality}
\label{2.3}

Generalized contextuality \cite{Spekkens2005} deals with preparations, transformations, and unsharp measurements (the latter equivalent to effect algebras defined by the set of effects in a GPT). It is a method to investigate the classicality of a system through operational equivalences. This approach is based on Leibniz's principle of indiscernibles, which states that if two objects possess the same properties, and therefore are indiscernible, then they are identical.

Generalized contextuality \cite{Spekkens2005} deals with preparations, transformations, and unsharp measurements (the latter equivalent to effect algebras defined by the set of effects in a GPT). It is a method to investigate the classicality of a system through operational equivalences. This approach is based on Leibniz's principle of indiscernibles, which states that if two objects possess the same properties, and therefore are indiscernible, then they are identical.

Using linearity to represent the construction of the final process from basic processes by their summation with certain coefficients, and rearranging the equation to make the linear dependence explicit, we find that for the states, the effects, and the transformations, operational equivalences can be expressed as linear conditions \cite{selby2021contextuality}
\begin{equation}
\label{basestates}
    \sum_{s}a_{s}^{(\alpha)}P_{s}=0,
\end{equation}
\begin{equation}
\label{baseeffects}
    \sum_{r}b_{r}^{(\beta)}E_{r}=0,
\end{equation}
\begin{equation}
\label{basetrasformations}
    \sum_{t}c_{t}^{(\tau)}T_{t}=0,
\end{equation}
indexed by $\alpha$, $\beta$, and $\tau$. This form will be important for what follows in the next sections. Note that although the initial construction imposes that the processes remain within the set of processes, and therefore limits the values of the coefficients, the reorganization in the linear conditions above allows the coefficients to turn the vectors into objects outside the set of processes.

\begin{definition}
    A theory is noncontextual if for a classical ontic representation the operational equivalences are preserved in the probabilities given by the valuation maps
\begin{equation}
\label{NCstates}
\sum_{s}a_{s}^{(\alpha)}\mu(\lambda|P_{s})=0,
\end{equation}
\begin{equation}
\label{NCeffects}
\sum_{r}b_{r}^{(\beta)}\xi(E_{r}|\lambda')=0,
\end{equation}
\begin{equation}
\label{NCtransformations}
\sum_{t}c_{t}^{(\tau)}\Gamma(\lambda',T_{t},\lambda)=0,
\end{equation}
for all $\lambda$ and $\lambda'$\footnote{This approach is more refined than other frameworks because of its operational interpretation, which allows the exploration of non-classicality beyond measurement and effects. The outcome-determinism is not imposed, but when imposed this approach is equivalent to the Sheaf Approach to contextuality \cite{Staton2015}.}.

\end{definition}

As the ontic representation is an embedding in a classical GPT, the ontic space $\Lambda$ is a simplicial set. The conditions of noncontextuality stated above assert that the original theory, its states, effects, and transformations, can be embedded in $\Lambda$, and its probabilities encoded in it as a coarse graining without violating classical probability theory in the sense of the Kolmogorov axioms \cite{schmid2020structure, Schmid2021}. Interestingly, this is equivalent to there being no need for negative values for the functions $\xi$, $\Gamma$, and $\mu$ when represented in an embedding as described above \cite{Spekkens2008}.

A property of valuation functions is their linearity within the convex set of objects in the domain. As an example, let's confine ourselves to the set of effects $\mathcal{E}$, though the same argument holds for states and transformations. Let $A,B\in\mathcal{E}$, and $A+B\in\mathcal{E}$, then
\begin{equation}
    p(A+B)=p(A)+p(B)
\end{equation}
if and only if
\begin{equation}
    \xi(A+B|\lambda')=\xi(A|\lambda')+\xi(B|\lambda'),
\end{equation}
which follows from the definition of ontic representation. Another way to see this is to note that if $A+B\in\mathcal{E}$, then $\{A,B,\mathds{1}-(A+B)\}$ is a valid measurement, thus
\begin{equation}
\begin{split}
    1=&\xi(\mathds{1}|\lambda')\\
    =&\xi(\mathds{1}-(A+B)|\lambda')+\xi((A+B)|\lambda')\\
    =&\xi(\mathds{1}-(A+B)|\lambda')+\xi(A|\lambda')+\xi(B|\lambda')
\end{split}
\end{equation}
and linearity follows, since $\xi(C|\lambda')+\xi(\mathds{1}-C|\lambda')=1$ for every effect $C$. But there is no guarantee that such linearity holds outside $\mathcal{E}$, as the following example shows.

\begin{example}
The Wigner's representation of quantum mechanics is an ontic representation of quantum theory, and it is linear for mixed states. However, as explained in Ref. \cite{doi:10.1119/1.2957889}, $W_{\psi}=W_{\alpha}+W_{\beta}$ with $\psi=\psi_{\alpha}+\psi_{\beta}$ generally does not hold. On the other hand, this holds for classical theories, which follow from the Kolmogorov axioms for probabilities.
\end{example}

\section{Differential Geometry and Discrete Differential Geometry}
\label{3}

In the previous section, we introduced the concepts of GPT, ontic representation, and how noncontextuality is expressed in such a formalism. In this section, we will present tools to glimpse what contextuality is actually doing in a physical model.

\subsection{Differential Geometry}
\label{3.1}

Let's quickly introduce the main concepts of differential geometry that we will use in the upcoming sections. The connection with what we have already presented will occur in the second part of this section, where we discretize what we will present here. With this, we will be able to present a characterization of contextuality in differential geometry in Section \ref{4} that can be immediately generalized to the continuous case. This will have important implications for the applications presented in Section \ref{5}.

We begin with smooth manifolds, which are locally similar to a vector space so that we can utilize calculus. We define a smooth manifold $\mathcal{M}$ as a topological space with topology $\tau$ that is covered by domains of homeomorphisms called charts $\varphi:U\in\tau\to\mathbb{R}^n$, such that for every pair of charts $\varphi_a$, $\varphi_b$, the transition map $\varphi_a\circ\varphi_b^{-1}$ is smooth. A set of charts that cover $\mathcal{M}$ is called an atlas, and it fully describes $\mathcal{M}$.

With this structure, we can work with the tangent vector space at a point $p\in\mathcal{M}$, denoted by $T_p\mathcal{M}$, which is given by a chart containing it that serves as a neighborhood. As in $\mathbb{R}^n$, we can define directional derivative $\partial_{k}=\frac{\partial}{\partial x_k}$. We are interested in studying how infinitesimal objects behave in this local environment and what properties they exhibit when seeking to extend them globally throughout $\mathcal{M}$. Since each point in $\mathcal{M}$ has its tangent vector space, we have a tangent bundle $T\mathcal{M}$.

The dual vector space of a tangent space is called the cotangent space, denoted by $T_p^*\mathcal{M}$, which is also defined at each point and generates a bundle $T^*\mathcal{M}$ called the cotangent bundle. It is important to note that this duality depends on which value we want to obtain from the application of an object from the cotangent space to the tangent space. Here we will deal with values in $\mathbb{R}$, meaning that if $\bra{\pi}\in T_p^*\mathcal{M}$ and $\ket{\omega}\in T_p\mathcal{M}$, then $\bra{\pi}\ket{\omega}\in\mathbb{R}$. The objects of $T_p^*\mathcal{M}$ are the covectors, and we can also define infinitesimal elements $dx_k$ dual to differentials $\partial_{k}$, known as differential $1$-forms.

The next step will be to think of $\partial$ and $d$ as operators by themselves. Continuing with the informality level of this presentation, $\partial$ acts as a boundary operator on a region, effectively reducing its dimension by one unit. Notice that each piece has its orientation, which is captured by the differential. We can define $\mathcal{C}_n(\mathcal{M})$ as the entire finite set of $n$-dimensional pieces of $\mathcal{M}$. The boundary operator is nothing more than a map $\partial: \mathcal{C}_n \to \mathcal{C}_{n-1}$, and since the boundary of a boundary is always an empty set, we have $\partial\partial = 0$, and we obtain the chain complex
\begin{equation}
    \begin{tikzcd}[row sep=tiny]
0 & \arrow{l}{\partial_0} \mathcal{C}_{0}(\mathcal{M}) & \arrow{l}{\partial_1} \mathcal{C}_{1}(\mathcal{M}) & \arrow{l}{\partial_2} \mathcal{C}_{2}(\mathcal{M}) & \arrow{l}{\partial_3} \dots
\end{tikzcd}
\end{equation}

We can explore the topology of $\mathcal{M}$ through its pieces and the natural group structure of these pieces via homology groups. First, we define the kernel of $\partial_{n}$ denoted by $Z_{n}(\mathcal{M})$, whose elements are called $n$-cycles. Meanwhile, the image of $\partial_{(n+1)}$ denoted by $B_{n}(\mathcal{M})$, consists of $n$-boundaries. The group given by $H_{n}(\mathcal{M})=Z_{n}(\mathcal{M})/B_{n}(\mathcal{M})$ is the $n$-homology group, which intuitively captures topological failures of dimension $n-1$ in $\mathcal{M}$. For instance, $Z_{1}(\mathcal{M})$ will consist of all one-dimensional objects in $\mathcal{M}$ that have no boundary, while $B_{1}(\mathcal{M})$ will consist of all boundaries of two-dimensional objects. Since the latter necessarily have no boundary themselves, $H_{n}(\mathcal{M})$ is capturing one-dimensional objects without boundaries that are not boundaries of any two-dimensional piece, identifying a failure in having such a piece, a ``missing'' point.

With $d$, referred to as the coboundary operator, we can utilize its duality with $\partial$ to show that $dd=0$, obtaining the cochain complex
\begin{equation}
    \begin{tikzcd}[row sep=tiny]
0 \arrow{r} & \mathcal{C}^{0}(\mathcal{M}) \arrow{r}{d_{0}} & \mathcal{C}^{1}(\mathcal{M}) \arrow{r}{d_{1}} & \mathcal{C}^{2}(\mathcal{M}) \arrow{r}{d_{2}} & \dots
\end{tikzcd}
\end{equation}
Where $\mathcal{C}^{n}$ are the duals of $\mathcal{C}_{n}$, which are infinitesimally generated by $n$-differential forms. The $n$-forms possess a beautiful mathematical structure which unfortunately will not be presented here. What will be important for us is that we can once again explore the topology of $\mathcal{M}$ with the $n$-forms, but this time not directly with pieces of $\mathcal{M}$ but rather with the functions that act upon these pieces. The kernel of $d_{n}$ is denoted by $Z^{n}(\mathcal{M})$, with elements called $n$-cocycles or closed $n$-forms, while the image of $d_{(n-1)}$ is denoted by $B^{n}(\mathcal{M})$, with elements called $n$-coboundaries or exact $n$-forms. The resulting algebraic structure $H^{n}(\mathcal{M})=Z^{n}(\mathcal{M})/B^{n}(\mathcal{M})$ is called the de Rham $n$-cohomology.

To conclude this brief presentation on elements of differential geometry, there are three theorems that will appear during this differential approach to contextuality and its immediate applications. The first is the generalized version of Stokes' theorem.

\begin{theorem}
Let $\omega$ be a smooth $(n-1)$-form with compact support on an oriented, $n$-dimensional manifold-with-boundary $S$, where $\partial S$ is given the induced orientation. Then
    \begin{equation}
    \int_{\partial S}\omega=\bra{\omega}\ket{\partial S}=\bra{d\omega}\ket{S}=\int_{S}d\omega.
\end{equation}
\end{theorem}

This theorem is the direct verification that the region $S$ can be seen as analogous to a region in $\mathbb{R}^n$, with the $n$-form $d\omega$ being in some way extended throughout the entire region.

Another important theorem is the Ambrose-Singer theorem, which relates the holonomy of a principal bundle to the curvature in the region where the holonomy is found. It identifies that the holonomy is the expression of curvature, and that curvature generates holonomy.

\begin{theorem}
     In a smooth manifold $M$ with a principal bundle $P$ over $M$ and a connection 1-form $\omega$, the holonomy algebra at a point $p\in P$ is generated by the curvature form $F$ derived from $\omega$ and evaluated along loops based at $p$. Specifically, the holonomy algebra is determined by the curvature form $F$ and its covariant derivatives evaluated on all possible pairs of horizontal vector fields at $p$.
\end{theorem}

The relationship between holonomy and curvature allows us to identify geometry through the phases of transport in loops. The curvature associated with the 1-form $\omega$ is nonzero at a point in $M$ if and only if there exists a nontrivial closed curve passing through that point whose holonomy phase along it is nontrivial.

Lastly, we have the Poincaré Lemma, which states that in a contractible manifold isomorphic to a region of $\mathbb{R}^n$, all closed forms are exact.

\begin{theorem}
Let $M$ be a smooth, orientable manifold of dimension $n$ that is isomorphic to $\mathbb{R}^n$. Then for every closed differential form $\omega$ of degree $k$ on $M$ (i.e., $d\omega = 0$), there exists a differential form $\eta$ of degree $k-1$ on $M$ such that $d\eta = \omega$.
\end{theorem}

\subsection{Discrete Differential Geometry}
\label{3.2}

As is customary in foundational studies, we initially confine ourselves to dealing with finite sets to elucidate non-classical behavior. This means that contextuality is initially addressed in finite structures, as are the usual versions of generalized contextuality. Furthermore, many GPTs, such as the classical one, feature edges that are obviously not smooth. Therefore, we need to take a step back and seek a way to incorporate contextuality into a formulation that is compatible with these conditions. This is the role of discrete differential geometry, which we will now briefly review.

An operational equivalence, as defined through a linear condition as above, encodes a discrete loop in its respective space. Contextuality is expressed in how the functions $\xi$, $\Gamma$, and $\mu$ deal with such loops. The noncontextual conditions can be thought of as discrete parallel transport of the probability functions that present no phase in a closed loop. Contextuality, as the violation of such conditions, is the discrete phase in each set. To formalize it, we will employ discrete differential geometry, or DDG \cite{Crane:2013:DGP,grady2010discrete}.

The formalism of DDG arises from the need for discrete methods to describe approximately smooth manifolds, as in computer graphics and geometry processing. While it's always possible to triangulate a smooth manifold, DDG, without imposing the usual differential structure, enables the study of more general topological manifolds known as piecewise linear manifolds.

We begin with a piecewise linear manifold $\mathcal{M}=\bigcup_{n}\mathcal{C}_{n}$, composed of sets of $n$-simplices $\mathcal{C}_{n}$. An $n$-simplex is treated as an $n$-dimensional ``unit of space'', and the topology is derived from the topology of the simplicial complex. To be valid as an approximation, each simplex is regarded as the tangent space of a point in a hypothetical smooth manifold. For calculus operations on this simplicial complex, we can introduce the formalism of discrete differential forms, which can be loosely understood as a method to quantify the ``size'' of the simplices. Discrete differential forms are defined as linear duals of the simplices, denoted by $\mathcal{C}^{n}$ representing the set of $n$-forms. If $\omega\in\mathcal{C}^{n}$, then we have
\begin{align}
\begin{split}
  \omega:\mathcal{C}_{n} &\to R\\
  \ket{S} &\mapsto \left<\omega |S\right>=\int_{S}\omega.
 \end{split}
\end{align}

The first operator in DDG is the boundary $\partial:\mathcal{C}_{n}\to\mathcal{C}_{n-1}$, defined as usual by the orientation defined in the simplicial complex
\begin{equation}
    \partial \{a_{1}a_{2}...a_{n}\}=\{a_{2}...a_{n}\}-\{a_{1}a_{3}...a_{n}\}+...\pm\{a_{1}a_{2}...a_{n-1}\}.
\end{equation}
As an example, a tetrahedron $\{abcd\}$ has boundary
\begin{equation}
    \partial \{abcd\}=\{bcd\}-\{acd\}+\{abd\}-\{abc\}
\end{equation}
in an alternate way. The second one is the coboundary $d:\mathcal{C}^{n}\to\mathcal{C}^{n+1}$. It is defined as the unique linear map that satisfies the generalized Stokes theorem for DDG
\begin{equation}
    \int_{\partial S}\omega=\bra{\omega}\ket{\partial S}=\bra{d\omega}\ket{S}=\int_{S}d\omega,
\end{equation}
where the bracket notation will be used to denote the action of a $n$-form on a $m$-dimensional region, $n\geq m$, both for the discrete and continuum cases. See that the integral gives a $(n-m)$-form as expected, and $0$-forms are identified as scalar.

The homology of the manifold $\mathcal{M}$ follows the simplicial homology \cite{Hatcher:478079}, and explores the topological structure of $\mathcal{M}$ through its simplicial complex structure and the boundary operator. Once the boundary of a boundary is empty, we have $\partial\partial=0$, and we also obtain the chain complex
\begin{equation}
    \begin{tikzcd}[row sep=tiny]
0 & \arrow{l}{\partial_0} \mathcal{C}_{0}(\mathcal{M}) & \arrow{l}{\partial_1} \mathcal{C}_{1}(\mathcal{M}) & \arrow{l}{\partial_2} \mathcal{C}_{2}(\mathcal{M}) & \arrow{l}{\partial_3} \dots
\end{tikzcd}
\end{equation}
The kernel of $\partial_{n}$, denoted by $Z_{n}(\mathcal{M})$, will have as its elements the $n$-cycles, while the image of $\partial_{(n+1)}$, denoted by $B_{n}(\mathcal{M})$, will have as its elements the $n$-boundaries. The algebraic structure $H_{n}(\mathcal{M})=Z_{n}(\mathcal{M})/B_{n}(\mathcal{M})$ is the $n$-homology. It explores the shape of $\mathcal{M}$ by directly studying the ``quanta of space'', or equivalently, the tangent space of $\mathcal{M}$ that is locally isomorphic to itself. Non-trivial $n$-holonomy implies an $n$-dimensional topological failure in the simplicial complex.

On the other hand, cohomology deals with the dual of the simplices, the discrete differential forms, and the coboundary operator. By the property $dd=0$, which follows from the definition of the coboundary using the generalized Stokes theorem, we have the cochain complex
\begin{equation}
    \begin{tikzcd}[row sep=tiny]
0 \arrow{r} & \mathcal{C}^{0}(\mathcal{M}) \arrow{r}{d_{0}} & \mathcal{C}^{1}(\mathcal{M}) \arrow{r}{d_{1}} & \mathcal{C}^{2}(\mathcal{M}) \arrow{r}{d_{2}} & \dots
\end{tikzcd}
\end{equation}
The kernel of $d_{n}$ is denoted by $Z^{n}(\mathcal{M})$, and it comprises the $n$-cocycles, also known as closed $n$-forms. The image of $d_{(n-1)}$, denoted by $B^{n}(\mathcal{M})$, comprises the $n$-coboundaries, also called exact $n$-forms. The algebraic structure $H^{n}(\mathcal{M})=Z^{n}(\mathcal{M})/B^{n}(\mathcal{M})$ is the de Rham $n$-cohomology. It involves studying what we integrate on $\mathcal{M}$ and how it relates to the shape of $\mathcal{M}$.

\section{Differential Geometry of Contextuality}
\label{4}

Cohomology studies the failure of solutions to equations of the form $d\omega=\sigma$, which reside in the cotangent space. Generally, the equation $d\omega=\sigma$ seeks a global cause for the local relationship it describes. Our aim here is to identify that contextual behavior is the inability to obtain such a global solution.

To this end, we demonstrate that it is possible to represent it in its usual form in discrete differential geometry, with the possibility of using differential geometry as a generalization. Such a description allows us to have a more natural view of the phenomenon of contextuality in ontic representations, both because it is mathematically akin to the well-known mathematics in physical applications and because it allows for the direct interpretation of mathematical objects regarding what is happening with the physical model.

\subsection{Operational Equivalences and Contextuality}
\label{4.1}

Effects, states, and transformations live in a real vector space by construction, which is isomorphic to its own tangent space. They form convex subsets through the imposition of convex combinations. This naturally gives rise to piecewise linear manifolds embedded in real vector spaces. Vectors are oriented pieces, quantities, of the manifold, and therefore reside in $\mathcal{C}^1$.

The equations \ref{basestates}, \ref{baseeffects} and \ref{basetrasformations} are just saying that an operational equivalence is a closed discrete loop $\gamma$ in the tangent space, 
\begin{equation}
 \sum_{s}a_{s}^{(\alpha)}\ket{P_{s}}=\gamma^{(\alpha)},
\end{equation}
\begin{equation}
 \sum_{r}b_{r}^{(\beta)}\ket{E_{r}}=\gamma^{(\beta)},
\end{equation}
\begin{equation}
 \sum_{t}c_{t}^{(\tau)}\ket{T_{t}}=\gamma^{(\tau)}.
\end{equation}
Operational equivalences and closed loops generated by elements of each subset encode the same information. What was described as two processes generating the same final process here is depicted as two distinct paths departing and arriving at the same point. The rearrangement under linear conditions turns the paths into a loop.

The noncontextual conditions presented in equations \ref{NCstates}, \ref{NCeffects}, and \ref{NCtransformations} are defined by probabilistic functions $\xi$, $\Gamma$, and $\mu$, indexed by ontic variables $\lambda\in\Lambda$. They live in the cotangent space as differential forms, acting on vectors and giving us a probability. Rewriting equations \ref{NCstates}, \ref{NCeffects}, and \ref{NCtransformations}, we obtain
\begin{equation}
\label{NCstates2}
    \phi^{(\alpha)}=\sum_{s}a_{s}^{(\alpha)}\mu_{\lambda}(P_{s})=\bra{\mu_{\lambda}}\left(\sum_{s}a_{s}^{(\alpha)}\ket{P_{s}}\right)=0,
\end{equation}
\begin{equation}
\label{NCeffects2}
    \phi^{(\beta)}=\sum_{r}b_{r}^{(\beta)}\xi_{\lambda'}(E_{r})=\bra{\xi_{\lambda'}}\left(\sum_{r}b_{r}^{(\beta)}\ket{E_{r}}\right)=0,
\end{equation}
\begin{equation}
\label{NCtransformations2}
    \phi^{(\tau)}=\sum_{t}c_{t}^{(\tau)}\Gamma_{\lambda'\lambda}(T_{t})=\bra{\Gamma_{\lambda'\lambda}}\left(\sum_{t}c_{t}^{(\tau)}\ket{T_{t}}\right)=0,
\end{equation}
for all $\lambda$ and $\lambda'$. The noncontextual conditions become just the valuation of the $1$-forms given by the functions $\xi$, $\Gamma$ and $\mu$ in each space given by the ontic representation to preserve the flat behavior of the vector spaces involved. In other words, we can understand such functions as potential vector fields in our discrete space, and ask for the preservation of the convex combination in the sense that the phase $\phi =0$ when evaluated on a loop $\gamma$.

A comment on linearity in the forms. The map that defines the vectors is $E\mapsto\ket{E}$, but generally, a vector $\sum_{r}c_{r}\ket{E_{r}}$ is different from $\ket{\sum_{r}c_{r}E_{r}}$, since the latter can lie outside $\mathcal{E}$. It can also include negative elements, so even the linear operations in equations \ref{basestates}, \ref{baseeffects}, and \ref{basetrasformations} generally lie outside of $\mathcal{E}$. This is necessary to deal with noncontextuality: the objective is to classically complete the theory, thus embedding it into a classical one, where $\sum_{r}c_{r}\ket{E_{r}}=\ket{\sum_{r}c_{r}E_{r}}$.

An important part of this construction is the immediate application in non-discrete loops. For any loop $\gamma$, we can integrate the differential form representing the valuation function of the ontic representation, obtaining a phase $\phi$. As we will see, in quantum theory this phase is already a well-studied non-classical phenomenon.

\subsection{Schrödinger's View: Geometric Contextual Behavior}
\label{4.2}

Let's keep our model in a flat space, such as understanding it as a submodel of a classical theory. This assumption is the trivial extension of the convex set to all vector spaces without any topological failure. Without such failures, all loops are just boundaries, $\gamma=\partial S$, and noncontextuality conditions can be rewritten as
\begin{equation}
    \bra{\mu_{\lambda}}\ket{\partial S_{\alpha}}=0,
\end{equation}
\begin{equation}
\bra{\xi_{\lambda'}}\ket{\partial S_{\beta}}=0,
\end{equation}
\begin{equation}
\bra{\Gamma_{\lambda'\lambda}}\ket{\partial S_{\tau}}=0,
\end{equation}
in the language of differential forms. Here we can use Stokes theorem to define the coboundary operator and get
\begin{equation}
    \bra{d \mu_{\lambda}}\ket{S_{\alpha}}=0,
\end{equation}
\begin{equation}
\bra{d \xi_{\lambda'}}\ket{S_{\beta}}=0,
\end{equation}
\begin{equation}
\bra{d \Gamma_{\lambda'\lambda}}\ket{S_{\tau}}=0.
\end{equation}
Again, this is possible because we are in $\mathbb{R}^n$, with states, effects, and transformations represented by vectors in its tangent space, making sense of $S_{\alpha}$, $S_{\beta}$, and $S_{\tau}$ as regions in $\mathbb{R}^n$.

In these conditions, every closed differential form is exact: if $\bra{d \xi_{\lambda'}}\ket{S}=0$ for all regions $S$, then $d\xi_{\lambda}'=0$, which means it is closed and thus exact, $\xi_{\lambda'}=dc_{\lambda'}$ with $c_{\lambda'}$ a function. The failure of noncontextual conditions implies that $\xi_{\lambda'}=dc_{\lambda'}+\omega_{\lambda'}$, where $d\omega_{\lambda'}\neq 0$, and by
\begin{equation}
\begin{split}
    \bra{d \xi_{\lambda'}}\ket{S_{\beta}}&=\bra{ddc_{\lambda'}}\ket{S_{\beta}}+\bra{d\omega_{\lambda'}}\ket{S_{\beta}}\\
    &=\bra{d\omega_{\lambda'}}\ket{S_{\beta}}
\end{split}
\end{equation}
we see that $\omega_{\lambda'}$ is the connection that generates contextual behavior, and $F_{\lambda'}=d\omega_{\lambda'}$ as the curvature 2-form. The same holds for states and transformations.

\begin{theorem}
Noncontextuality for measurements (transformations; states) of an ontic representation is equivalent to a null contextual curvature $0=F_{\lambda'}=d\xi_{\lambda'}$ (respectively $0=F_{\lambda'\lambda}=d\Gamma_{\lambda'\lambda}$; $0=F_{\lambda}=d\mu_{\lambda}$) for all hidden variables that index it.
\end{theorem}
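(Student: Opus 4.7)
My plan is to treat the three cases (effects, transformations, states) in parallel, since they are instances of the same pattern: a $1$-form on a flat piecewise linear manifold whose periods along a distinguished family of closed loops are asked to vanish. I will carry out the argument for the effect case $\xi_{\lambda'}$ and observe that substituting $\mu_{\lambda}$ or $\Gamma_{\lambda'\lambda}$ for $\xi_{\lambda'}$, and the coefficients $a_{s}^{(\alpha)}$ or $c_{t}^{(\tau)}$ for $b_{r}^{(\beta)}$, reproduces the other two verbatim.

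For the direction ``$F_{\lambda'}=0$ implies non-contextuality'', I start from $d\xi_{\lambda'}=0$ and apply the generalized Stokes theorem already set up in the excerpt: for any $2$-chain $S_{\beta}$ whose boundary is the operational-equivalence loop $\gamma^{(\beta)}=\sum_{r}b_{r}^{(\beta)}\ket{E_{r}}$, one has
\begin{equation}
\sum_{r}b_{r}^{(\beta)}\xi(E_{r}|\lambda')
=\bra{\xi_{\lambda'}}\ket{\partial S_{\beta}}
=\bra{d\xi_{\lambda'}}\ket{S_{\beta}}
=0,
\end{equation}
which is exactly equation~(\ref{NCeffects}), and this holds for every $\lambda'$ and every $\beta$.

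For the converse I assume the non-contextuality identities (\ref{NCeffects}) hold for every operational equivalence $\beta$. Flatness of the ambient vector space (every $1$-cycle is a $1$-boundary, since the space is contractible) provides, for each $\gamma^{(\beta)}$, a $2$-chain $S_{\beta}$ with $\partial S_{\beta}=\gamma^{(\beta)}$, so Stokes read in the opposite direction gives $\bra{d\xi_{\lambda'}}\ket{S_{\beta}}=0$. The remaining step is to upgrade this from the distinguished family $\{S_{\beta}\}$ to arbitrary $2$-chains, after which the duality between forms and chains used in Section~\ref{2.4} forces $d\xi_{\lambda'}=0$ as a $2$-form.

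The main obstacle is precisely this last spanning step. Its justification rests on the observation, already invoked informally in the paragraph preceding the theorem, that the flat-space extension permits arbitrary real coefficients $b_{r}^{(\beta)}$, so the admissible $\gamma^{(\beta)}$ generate the full $1$-cycle group of the extended complex, and the corresponding $S_{\beta}$ then generate the full $2$-chain group. This is the precise point at which the geometric view commits to enlarging the convex sets $\mathcal{P},\mathcal{T},\mathcal{E}$ to their affine hulls; once this enlargement is granted, the equivalence reduces to Stokes plus the form-chain pairing, and the statements for $\mu_{\lambda}$ and $\Gamma_{\lambda'\lambda}$ follow by the same template applied to the loops generated respectively by $a_{s}^{(\alpha)}$ and $c_{t}^{(\tau)}$.
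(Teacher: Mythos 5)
Your proposal is correct and follows essentially the same route as the paper: the paper's own (implicit) argument in Section~\ref{3.1} is exactly the two applications of the discrete Stokes theorem you give, with flatness guaranteeing that every operational-equivalence loop bounds a region $S$. You are in fact slightly more careful than the paper at the one delicate point — the paper simply asserts ``$\bra{d\xi_{\lambda'}}\ket{S}=0$ for all regions $S$'' where you explicitly justify passing from the distinguished family $\{S_{\beta}\}$ to arbitrary $2$-chains via the affine-hull extension — but this is a refinement of, not a departure from, the paper's argument.
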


Geometrically, we can view each valuation and set of objects as defining a fiber bundle, with $\mathbb{R}$ regarded as a commutative group. As an $\mathbb{R}$-bundle, it is isomorphic to the trivial bundle $\mathbb{R}^{n}\times\mathbb{R}$, and with the restriction $\mathcal{E}\times [0,1]$ well defined (and analogously for $\mathcal{T}$ and $\mathcal{P}$). The curvature $F$ is in the lifting by the valuation function of the set of objects in the fiber bundle.

In this view there is no topological failure; it is a geometrical question. It is analogous to electromagnetism, with an electromagnetic tensor $F$ that can be written through holonomic loops \cite{Sarita2016,weatherall2016categories}. The geometrical view identifies contextuality with non-trivial holonomy of the contextual connection $\omega$.

\begin{example}
The Sheaf approach \cite{Abramsky_2011}, as well as the Kochen-Specker contextuality \cite{Kochen1975}, impose a classical structure on local events by forcing them to be elements of their Boolean completion. In the Sheaf approach, this is done by imposing the sheaf properties on the presheaf of events that encode the outcomes. Since events are treated deterministically, they are identified as vertices in a classical GPT. Once completed in a Boolean structure given by the classical GPT, any contextual behavior is expressed in the valuations. For example, it is the presheaf that defines the distributions that need to be studied to verify the model's contextuality. Therefore, in its construction, they use Schrödinger's view.
\end{example}

\subsection{Heisenberg's View: Topological Contextual Behavior}
\label{4.3}

Let's reject the use of curvature to explain contextuality. This means that we want a valuation to satisfy the properties of a classical probability distribution, satisfying Kolmogorov's axioms. Thus, $F=0$, and contextuality is not a correction in the valuation but lies in a different part of the model.

\begin{theorem}
If $F=0$, then contextuality of an ontic representation is equivalent to monodromy.
\end{theorem}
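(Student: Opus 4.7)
The plan is to combine the generalized Stokes theorem from section \ref{2.4} with the constraint $F=d\omega=0$ and to extract topological information from the refusal to invoke a hidden curvature. As a preliminary step I note that, with $F=0$, each valuation $1$-form (say $\xi_{\lambda'}$, treated as a discrete connection) is closed on the simplicial complex $\mathcal{M}$ built from $\mathcal{E}$, so the discrete Poincar\'e lemma makes it locally exact: on every contractible patch $U\subset\mathcal{M}$ one has $\xi_{\lambda'}|_{U}=dc|_{U}$ for some $0$-cochain $c$. This immediately trivialises every contextual phase that is supported on a loop bounding a $2$-chain inside such a patch, and it reduces the statement of the theorem to a comparison between $H^{1}(\mathcal{M})$ and $H_{1}(\mathcal{M})$.

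The forward direction then proceeds by contradiction. Suppose contextuality holds, so at least one operational-equivalence cycle $\gamma^{(\beta)}=\sum_{r}b_{r}^{(\beta)}\ket{E_{r}}$ satisfies $\bra{\xi_{\lambda'}}\ket{\gamma^{(\beta)}}\neq 0$. If $\gamma^{(\beta)}$ were a boundary $\partial S_{\beta}$ in $\mathcal{M}$, the generalized Stokes theorem combined with $F_{\lambda'}=d\xi_{\lambda'}=0$ would give
\begin{equation}
\bra{\xi_{\lambda'}}\ket{\gamma^{(\beta)}}=\bra{d\xi_{\lambda'}}\ket{S_{\beta}}=\bra{F_{\lambda'}}\ket{S_{\beta}}=0,
\end{equation}
contradicting the assumption. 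Hence $[\gamma^{(\beta)}]$ represents a non-trivial class in $H_{1}(\mathcal{M})$, and the non-zero evaluation is precisely the de Rham pairing $\langle[\xi_{\lambda'}],[\gamma^{(\beta)}]\rangle$, which by flatness depends only on the homotopy class of $\gamma^{(\beta)}$; this is the monodromy of the flat valuation. The argument repeats verbatim for $\mu_{\lambda}$ and $\Gamma_{\lambda'\lambda}$ on their respective complexes. For the converse, given a non-trivial monodromy class of the flat connection around some $1$-cycle $\gamma$, the convex-hull construction of $\mathcal{M}$ identifies the chain coefficients of $\gamma$ with an operational-equivalence relation; evaluating the flat $1$-form on $\gamma$ then reproduces the non-contextuality sum, which by hypothesis fails to vanish, yielding contextuality.

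The main obstacle, which I would treat carefully, is the geometric step identifying the refusal to introduce curvature with the restriction to a simplicial complex having non-trivial $H_{1}$. The ambient real vector space is contractible, so non-trivial monodromy can only arise from deliberately deleting the points that lie outside the physical convex sets---those formal combinations $\sum_{r}c_{r}\ket{E_{r}}\notin\mathcal{E}$ that the preceding discussion warns cannot be identified with $\ket{\sum_{r}c_{r}E_{r}}$. Making precise the claim that these unphysical deletions carve out exactly the holes whose non-trivial $\pi_{1}$ supports the monodromy is, to my mind, the conceptual heart of the theorem and the one step that does not reduce to a mechanical application of Stokes' theorem.
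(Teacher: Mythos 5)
Your proposal is correct and follows essentially the same route as the paper's own proof: under $F=d\omega=0$ the valuation's correction term is closed, so by Stokes' theorem any phase supported on a bounding cycle vanishes, and a non-trivial contextual phase can therefore only be the pairing of a closed-but-not-exact $1$-form with a non-bounding cycle, i.e.\ monodromy. Your treatment is in fact more complete than the paper's --- you supply the converse direction explicitly and you correctly flag, as a step still needing rigor, the point the paper merely asserts: that the ambient $\mathbb{R}^{n}$ is contractible, so the non-trivial $H_{1}$ must be carved out by restricting to the physically meaningful subsets rather than being a property of the representation space itself.
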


\begin{proof}
Contextuality implies a correction in the valuation, once that by construction the form $dc$ satisfies the noncontextuality conditions. Thus the valuation must be $dc+\omega$ with $\phi=\bra{\omega}\ket{\gamma}$. As $F=d\omega=0$, the $1$-form must be closed but not exact to show any non-trivial phase $\phi$, which only happens when the loop $\gamma$ is not the boundary $\partial S$ of a region $S$ when seen by the valuation. In other words, the valuation cannot be defined in $S$, implying that the ontic representation in $\mathbb{R}^{n}$ does not preserve the topology induced by the set of objects $\mathcal{E}$, $\mathcal{P}$ or $\mathcal{T}$. The form $\omega$ capts such topological failures through monodromy $\phi$, once we cannot access these topological failures.
\end{proof}

Without curvature, we still need to define a correction $\xi_{\lambda'}=dc_{\lambda'}+\omega_{\lambda'}$, with $F=d\omega_{\lambda'}=0$. But now closed forms cannot be exact, which means $\omega_{\lambda'}$ is a representation of a topological failure. Specifically, it represents a non-trivial element of the first cohomological group $[\omega_{\lambda'}]\in H^1$ defined on the set of objects. In the topological view, even with the fiber $\mathbb{R}$ and with the restriction $\mathcal{E}$ well defined (and analogously for $\mathcal{T}$ and $\mathcal{P}$), the fiber bundle is not trivial. The basis is not topologically trivial, and so is the fiber bundle. And this is what the valuation detects.

The topological view allows us to generalize results from the standard contextuality framework to the generalized one \cite{Sidiney_2021}.

\begin{theorem}
The $\mathbb{R}$-fiber bundle described by a model on an ontic representation is trivial in the topological view and so noncontextual if and only if any local section admits an extension to a global section.
\end{theorem}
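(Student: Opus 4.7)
The plan is to reduce the statement to the classical fact that a principal $G$-bundle admits a global section if and only if it is trivial, specialized to $G=\mathbb{R}$ (the additive group, viewed as the fiber) over the piecewise-linear base $\mathcal{M}$ given by the convex set of objects (effects, states, or transformations). The bridge between the bundle-theoretic language and the valuation language is already set up in the earlier discussion of Section \ref{3.2}: a local potential $c_{\lambda'}$ on $U\subset\mathcal{M}$ satisfying $dc_{\lambda'}=\omega_{\lambda'}|_U$ is exactly the data of a local section of the $\mathbb{R}$-bundle, and triviality amounts to the existence of a single global such potential, i.e.\ to the monodromy $\phi$ of $\omega_{\lambda'}$ vanishing on every cycle. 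This identifies \emph{non-contextual} with \emph{trivial bundle} in the topological view.

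For the ($\Rightarrow$) direction, assume the bundle is trivial, so fix a bundle isomorphism $\Phi: E \to \mathcal{M}\times\mathbb{R}$. Given a local section $s:U\to E$ over some $U\subseteq\mathcal{M}$, push it through $\Phi$ to obtain a map $f:U\to\mathbb{R}$. Since $\mathbb{R}$ is contractible and $\mathcal{M}$ is piecewise linear, one can extend $f$ simplex by simplex: on each simplex not meeting $U$, assign any value; on simplices meeting $\partial U$, extrapolate by convex combination along the linear structure (a DDG analogue of partition of unity). Define $\tilde{s}=\Phi^{-1}\circ(\mathrm{id},\tilde{f})$; this is the required global extension.

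For the ($\Leftarrow$) direction, I would use that fiber bundles are locally trivial by definition, so local sections always exist on the open cover coming from trivializing neighborhoods. Picking any such local section $s$ and extending it to a global section $\sigma:\mathcal{M}\to E$ by hypothesis, I would then invoke the standard fact that for a principal $\mathbb{R}$-bundle a global section trivializes the bundle: the map
\begin{equation}
\Psi:\mathcal{M}\times\mathbb{R}\longrightarrow E,\qquad (x,t)\longmapsto \sigma(x)\cdot t,
\end{equation}
is a bundle isomorphism, where $\cdot$ denotes the free transitive $\mathbb{R}$-action on the fibers. Non-contextuality then follows from the identification made in the first paragraph.

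The main obstacle I expect is making the extension in ($\Rightarrow$) rigorous in the PL/simplicial setting rather than the smooth one — one must check that convex-combination extension is compatible with the DDG cochain structure used to write $\xi_{\lambda'}=dc_{\lambda'}+\omega_{\lambda'}$, so that the extended section really corresponds to a global potential for the closed 1-form. A secondary subtlety is to be careful that the $\mathbb{R}$-action used in ($\Leftarrow$) is the one implicitly fixed by the identification of fibers with $\mathbb{R}$ coming from the valuation values, so that "trivial bundle" matches "no monodromy of the valuation" rather than an unrelated trivialization.
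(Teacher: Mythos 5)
Your proposal is correct and follows essentially the same route as the paper, which simply asserts ``the equivalence of the extendability of local sections and triviality'' together with the identification of triviality with the absence of monodromy/phase in the valuation; you are supplying the standard principal-bundle details (global section $\Leftrightarrow$ trivialization, and extension of sections via contractibility of the fiber $\mathbb{R}$) that the paper leaves implicit. The subtleties you flag about the PL/DDG setting and about matching the trivialization to the valuation-induced fiber identification are real but are not addressed in the paper either.
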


This result follows from the equivalence of the extendability of local sections and triviality para o $\mathbb{R}$-fiber bundle. The ontic representation is noncontextual for a given valuation if and only if the fiber bundle presents no phase, which is equivalent to present extensions to global sections for any ontic variable $\lambda$ (or the pair $\lambda$ and $\lambda'$ for transformations), implying the fiber bundle being trivial.

\begin{example}
In quantum theory, Gleason's theorem relates the properties of the set of effects of a quantum system of dimension 3 or more with the quantum probabilistic valuation given by the Born rule. To do this, it imposes certain conditions.

The first is that the set of effects will be studied in its representation as rays of a Hilbert space. The second is that all states with probabilistic valuations over the set of effects are valid, thus defining the physical states. The third condition imposes the continuity of such valuation, thus relating the topology of the set to what the valuation is capturing. The last condition is the noncontextuality of the valuation, which is nothing more than the imposition that the valuation does not carry the contextuality of the effects, i.e., there is no curvature.

When making the ontic representation, the condition of no curvature is maintained, but the representation cannot capture all the details of quantum theory. We conclude that the Born rule in the standard representation of a quantum system arises from Heisenberg’s view, imposing all the contextuality in the topology of the set of effects. In such a topology, the contextuality is encoded, and it is by exploring its expressions that fundamentally quantum phenomena can be identified.
\end{example}

\subsection{The Nature of Contextuality: A Choice Between Topology and Geometry}
\label{4.4}

Contextuality is a property presented by an ontic representation. A theory will be contextual if and only if no ontic representation can describe it. A representation of a given set of objects, be it effects, states, or transformations, occurs in two ways: in the encoding of the processes; and in the encoding of the valuation function that acts on these processes to give us the probabilities. Taking all processes and levels of encoding into account, there are many different ways contextuality can be encoded in the representation. This shows its lack of empirical fundamentality and guides us to what is truly fundamental in the model. That's what we'll discuss here.

Contextual behavior can be encoded in each part of the process depending on the ontic representation, but that's not what we'll address here. What we aim to demonstrate here is the structure that the process/valuation levels possess, and the freedom of representation that it allows. We can codify what is going on with a diagram (here I will use $\mathcal{E}$, but the same can be said about $\mathcal{P}$ and $\mathcal{T}$), , where contextuality arises from the non-commutativity of the diagram
\begin{equation}
\label{diagram}
    \begin{tikzcd}
\mathcal{E} \arrow{rr}{p} \arrow[hookrightarrow]{dr}{i} & & \left[0,1\right] \\
& \mathcal{S} \arrow{ur}{\xi_{\lambda'}} &
\end{tikzcd}.
\end{equation}
The three elements, the system $\mathcal{E}$, the classical representation $\mathcal{S}$, and the target for valuation $[0,1]$, are all fixed, as is the map $p$. As contextuality is the failure of one of the maps to keep the data of the system, the inclusion $i$ or the valuation $\xi_{\lambda'}$ of the representation fail.

The first case, failure of inclusion, is the usual interpretation in terms of contextuality. This is the topological view, where we interpret that it's the inclusion of the system that causes the problem. The noncommutativity of the diagram is fundamental, and it cannot be understood in any other way than as a departure from our notion of reality, as in the participatory-realistic interpretations of quantum theory. A justification for its use is that a loop could not be immediately written as the boundary of a region since an inner region is not supposed to exist in the first place, at least not for every loop. Thus, the curvature could not be defined. To avoid such a problem, the extreme is to suppose $F=0$ everywhere, which one can interpret as the non-existence of a correction in the valuation changing the classical behavior. It is an intrinsic description, whose contextuality depends on the set of objects itself. Therefore, we have failures in reality itself, which defines the topological description as an participatory-realistic point of view.

The second case, the failure of valuation, imposes embedding. It's the inadequate valuation that causes problems. All the properties of the original system can be captured once one uses a modified valuation. It's not just hidden variables of the ontic representation; it also needs new rules to extract the probabilities. This is what happens in intrinsic-realistic interpretations of quantum theory, and this is the point of the geometrical view. The trivial fiber bundle is imposed, but curvature in the connection creates the phases by holonomy, following the Ambrose–Singer theorem. The geometrical view changes the valuation function by a generator of non-classicality. It can be thought of as curvature of the valuation on a set of classical objects. It's a modification of our classical laws by a hidden nature.

Both notions are equivalent, and one can argue that both causes can coexist. Choosing to what extent the topological and geometric causes generate non-classicality is just a matter of representing a deeper phenomenon: contextuality. While there are no ways to differentiate between different representations by verifying that they aren't faithful to some level of reality yet to be explored, for example, empirically verifying in which ontic representation the model works, it doesn't make much difference what is actually going on at the ontic level.

\section{Applications}
\label{5}

We are in a position to apply the understanding of the contextuality that the description by differential geometry provides us. The strategy in all of them is simple: to look for where the $1$-form that carries the contextual behavior is in the respective formalism and, respecting its constraints, explore where it is expressed.

Each application has a preference for one of the views we described earlier. It's important to keep in mind that this is a choice for building each application, but it's not the natural description in which contextual behavior is encoded. Such a description is indifferent to interpretations, and as we'll see, it presents itself differently for each of the interpretations of quantum theory.

\subsection{Contextual Fraction}
\label{5.1}

The contextual fraction \cite{Abramsky_2017,barbosa2019continuousvariable} was introduced in the Sheaf approach \cite{Abramsky_2011}. Usually, it is applied in response to the limitations of this approach, only exploring the contextuality of effects, known as measurement contextuality, and imposing outcome-determinism. To achieve this, we fix a state and do not apply any transformation. An ontic representation will have this form
\begin{equation}
    p(E)=\int_{\Lambda}\mu(\lambda)\xi(E|\lambda),
\end{equation}
with $\mu$ a measure on the set of ontic variables $\Lambda$. Due to the condition of outcome-determinism, the valuation function $\xi$ will assume only the values $0$ or $1$, and, for simplicity, we assume the finiteness of the sets.

With these constraints, the contextual fraction can be used in its usual form, and we can express the probability $p$ as a decomposition
\begin{equation}
    p(E)=(\text{NCF})p_{NC}(E)+(\text{CF})p_{SC}(E),
\end{equation}
with the noncontextual fraction (NCF) and the contextual fraction (CF), where $p_{NC}$ represents the probabilistic distribution of a noncontextual model, and $p_{SC}$ represents the probabilistic distribution of a strong contextual model, the part without any noncontextual contribution. For the definition of CF and NCF, we seek to maximize the quantity of NCF among the set of distributions in the ontic variables $\mu$ that can be used. Even with the maximization of NCF, the distributions $p_{NC}$ and $p_{SC}$ are not uniquely defined. The fractions also satisfy the sum property of probability
\begin{equation}
\begin{split}
    1&=\sum_{r}p(E_{r})\\
    &=(\text{NCF})\sum_{r}p_{NC}(E)+(\text{CF})\sum_{r}p_{SC}(E)\\
    &=\text{NCF}+\text{CF},
\end{split}
\end{equation}
exposing the meaning of CF and NCF being referred to as fractions.

Just like the Sheaf approach, contextual fraction follows Schrödinger's view, embedding the effects in a classical GPT. The outcome-determinism condition implies that the valuation function must have Boolean values, thus forcing the effects to be fixed on the vertices of the effect hypercube. To expose contextuality, we need to choose the probabilistic distribution over ontic variables that maximizes NCF.

The maximization is in terms of $\mu$, and it pertains to how the valuation of the ontic representation encodes the behavior of the effects, given by the probabilistic weight of $\mu$ when embedding the set of effects in the ontic representation. For the valuation function $\xi$, we have that the previous conditions fix the set $\Lambda$ and its relation with $\mathcal{E}$, thus also fixing $\xi$. This means that the $1$-forms are not subject to maximization, as they are intrinsic to the ontic representation previously fixed.

We can do the decomposition of the valuation $\xi=dc+\omega$
\begin{equation}
    p(E)=\int_{\Lambda}\mu(\lambda)\bra{dc_{\lambda}}\ket{E}+\int_{\Lambda}\mu(\lambda)\bra{\omega_{\lambda}}\ket{E},
\end{equation}
that also satisfies the sum property of probability
\begin{equation}
    1=\sum_{r}p(E_{r})=\sum_{r}\int_{\Lambda}\mu(\lambda)\bra{dc_{\lambda}}\ket{E_{r}}+\sum_{r}\int_{\Lambda}\mu(\lambda)\bra{\omega_{\lambda}}\ket{E_{r}}.
\end{equation}

Let's assume that the maximization of NCF has already been done. Identifying the contextual and noncontextual parts of the previous equation as those containing $\omega$ and $dc$, respectively, we get the fractions
\begin{equation}
    \text{NCF}=\sum_{r}\int_{\Lambda}\mu(\lambda)\bra{dc_{\lambda}}\ket{E_{r}},
\end{equation}
\begin{equation}
    \text{CF}=\sum_{r}\int_{\Lambda}\mu(\lambda)\bra{\omega_{\lambda}}\ket{E_{r}},
\end{equation}
and the probabilistic distributions
\begin{equation}
    p_{NC}(E)=\frac{\int_{\Lambda}\mu(\lambda)\bra{dc_{\lambda}}\ket{E}}{\sum_{r}\int_{\Lambda}\mu(\lambda)\bra{dc_{\lambda}}\ket{E_{r}}},
\end{equation}
\begin{equation}
    p_{SC}(E)=\frac{\int_{\Lambda}\mu(\lambda)\bra{\omega_{\lambda}}\ket{E}}{\sum_{r}\int_{\Lambda}\mu(\lambda)\bra{\omega_{\lambda}}\ket{E_{r}}},
\end{equation}
which are not unique since they depend on the choice of $\mu$ in the set that maximizes NCF.

In particular, a property of the contextual fraction is that the maximal violation of a generalized Bell inequality of the model is given by CF \cite{Abramsky_2017}. Since $\text{CF}=\sum_{r}\int_{\Lambda}\mu(\lambda)\bra{\omega_{\lambda}}\ket{E_{r}}$, we have the explicit dependence on the contextual correction of the valuation.

\subsection{Interference in Quantum Measure Theory and Quantum Theory}
\label{5.2}

Interference is a natural property of quantum theory when described by wave functions. Richard Feynman said that interference ``contains the only mystery'' of quantum theory \cite{FeynmanVol3}. Rafael Sorkin \cite{1994SORKIN,sorkin1995quantum} introduced a generalized notion of interference as a correction to the standard measure theory based on Kolmogorov axioms, by modifying the disjoint rule to
\begin{equation}
    p(A\sqcup B)=p(A)+p(B)+I_{2}(A,B)
\end{equation}
for two disjoint sets,
\begin{equation}
\begin{split}
    p(A\sqcup B\sqcup C)=&p(A)+p(B)+p(C)\\&-I_{2}(A,B)-I_{2}(B,C)-I_{2}(A,C)\\&+I_{3}(A,B,C)
\end{split}
\end{equation}
for three disjoint sets, and so on, following analogously to the inclusion-exclusion principle in combinatorics.

With valuation functions that allow interference terms $I_n$, we have a generalization of the usual measure theory, which is called quantum measure theory. The set on which the probability measure acts is taken as a measurable set, and since we are dealing with the finite case, it has a natural Boolean structure. This implies that we are embedding the processes into a classical GPT and imposing that any correction must be expressed in the valuation function, specifically in the interference terms $I_n$. Therefore, quantum measure theory follows Schrödinger's view.

Here our interest will be in how such formalism can be used to capture contextuality, as already explored in Refs. \cite{2006Craig,2008Dowker,2014Dowker}.

\begin{example}
Quantum theory presents only the correction $I_{2}$ fundamentally non-trivial, which can be seen clearly in the path integral approach. The form of $I_{2}$ follows directly from the calculation of quantum probabilities from the Born rule. For sharp effects $e$ and $e'$, which we assume are not in the same subspace to be disjoint, and a pure state $\rho=\ket{\rho}\bra{\rho}$, we can write
\begin{equation}
    \begin{split}
        |\bra{e\vee e'}\ket{\rho}|^2&=\bra{e\vee e'}\ket{\rho}+\bra{\rho}\ket{e\vee e'}\\
        &=|\bra{e}\ket{\rho}|^2+|\bra{e'}\ket{\rho}|^2+I_{2}^{\rho}(e,e')
    \end{split}
\end{equation}
with 
\begin{equation}
    I_{2}^{\rho}(e,e')=\bra{e}\ket{\rho}\bra{\rho}\ket{e'}+\bra{e'}\ket{\rho}\bra{\rho}\ket{e}
\end{equation}
the symmetric interference function. 

The relationship with the interference of a double-slit experiment follows from treating each effect as a distinct path from the slit, which is explored by a state $\rho$. In this case, we can identify the isolated terms $|\bra{e}\ket{\rho}|^2$ and $|\bra{e'}\ket{\rho}|^2$ as closed paths that loop back on themselves. As for the term $I_{2}^{\rho}(e,e')$, it comprises two terms accounting for the two directions in which we can loop through the two paths. Interpreting them as ``passing through both paths simultaneously'' is what generates the notion of interference, as well as the non-classicality that such a term may carry.

For sharp effects, only $I_{2}$ is non-trivial, which follows from Specker's Principle \cite{speckervideo,cabello2012speckers}. For unsharp ones, high-order interference appears, but they are non-fundamental once they can be rewritten from the intersections ($\wedge$) and $I_{2}$.
\end{example}

Interference terms are necessary but not sufficient for contextualilty in a model. For instance, classical systems can exhibit interference, imposing a correction term on distributions. This term is given by classical correlations that are generated when marginalizing a GPT that captures all classical details into a smaller GPT through a marginalization process that acts as coarse graining. The difference is that the terms $I_n$ need not be non-negative, which, for the case of outcome-determinism and restricted to effects, is a sign of contextualilty \cite{Abramsky_2011,Spekkens2008}.

For the ontic representation given by quantum measure theory, the set of objects in the ontic representation is such that any correction to Kolmogorov's axioms follows from the connection $\omega$. This follows from the property of $dc$ satisfying the axioms in the ontic representation when subjected to the condition of outcome-determinism, as it arises from a probabilistic distribution over a set of effects of a Boolean GPT. Any part of $\omega$ that ends up being classical follows from the non-refinement of the ontic representation, as discussed in the previous paragraph.

Aware of this limitation, we can see how $\omega$ appears as the generator of the interference term. Suppose the effects $E$, $E'$, and $E\vee E'$ are not in a Boolean sub-algebra. Since $dc_{\lambda'}$ satisfies Kolmogorov axioms, we have that
\begin{equation}
\begin{split}
    p(E\vee E')=&\int d\mu \bra{\xi}\ket{E\vee E'}\\
    =&\int d\mu \left(\bra{dc}\ket{E\vee E'}+\bra{\omega}\ket{E\vee E'}\right)\\
    =&\int d\mu \left(\bra{dc}\ket{E}+\bra{dc}\ket{E'}+\bra{\omega}\ket{E\vee E'}\right)\\
    =&\int d\mu\bra{\xi}\ket{E}+\int d\mu\bra{\xi}\ket{E'}\\
    &+\int d\mu\left(\bra{\omega}\ket{E\vee E'}-\bra{\omega}\ket{E}-\bra{\omega}\ket{E'}\right)\\
    =&p(E)+p(E')+\\
    &\int d\mu\left(\bra{\omega}\ket{E\vee E'}-\bra{\omega}\ket{E}-\bra{\omega}\ket{E'}\right).
\end{split}
\end{equation}
So, for disjoint effects,
\begin{equation}
    I_{2}(E,E')=\int d\mu \left(\bra{\omega}\ket{E\vee E'}-\bra{\omega}\ket{E}-\bra{\omega}\ket{E'}\right)
\end{equation}
showing that the failure of $\omega$ to satisfy the Kolmogorov disjoint axiom is the cause of interference. Note that what is measured in the geometrical view is the failure of the parallelogram law of the valuation function captured by the distribution $\mu$ that represents the relationship between the state and the ontic representation. The analogous construction can be made with high-order interference.

As shown in \cite{anastopoulos2002quantum}, in quantum theory, the decoherence functional \cite{PhysRevD.46.1580} is determined by geometric phases. However, only its real part possesses ``reality'' due to being Hermitian, a property that arises from strong positivity \cite{https://doi.org/10.25560/70797}. This real part induces interference, which is the primary focus of quantum measure theory \cite{1994SORKIN,sorkin1995quantum,surya2008quantum,Gudder2009}. Therefore, the connection between interference and geometric phase is profound in quantum theory, and it can be utilized to detect non-classicality \cite{Asadian2015}.

\begin{example}
Noncommutativity in sharp quantum theory is where contextual behavior on effects is hidden. This follows from the capacity of using unitary transformations in relation to a fixed effect to define any other effect. As incompatibility of sharp effects is necessary for contextuality, and it is equivalent to noncommutativity \cite{Heinosaari_2008,Heinosaari_2010}, the non-exact part $\omega$ of the valuation depends on the non-trivial commutator. 

For two noncommutative unitaries, $U$ and $U'$, the structure constant depends on $\hbar$, and defines a loop $U^{-1}U'^{-1}UU'$. Noncommutativity implies a geometric phase that can generate an interference correction, thus given by the non-exact term $\omega$. The limit $\hbar\to 0$ cancels the non-classical behavior, which means we can write $\omega=\hbar\Tilde{\omega}$, to explicitly denote its dependence on $\hbar$,
\begin{equation}
    \xi_{\lambda'}=dc_{\lambda'}+\hbar\Tilde{\omega}_{\lambda'}.
\end{equation}

This also holds for states and transformations, which can exhibit non-classicality, as exemplified by the Bargmann invariants \cite{Bargmann} and the Aharonov–Bohm effect \cite{2010Popescu}. For this, one can use unitary transformations on them, analogous to what has already been done in addressing the relationship between contextuality and the Wigner function representation of quantum theory \cite{kocia2017,kocia2017again,kocia2018}.
\end{example}

\subsection{Signed Measures and Embedding of GPTs}
\label{5.3}

The relationship between contextuality and negativity has already been explored in Refs. \cite{Abramsky_2011, Spekkens2008}, as well as the relationship between embedding and contextuality in Refs. \cite{shahandeh2021, Schmid2021}. The unification of contextuality, embedding, and negativity was achieved in Ref. \cite{schmid2020structure}, within the formalism of GPTs. Our goal here is to discuss how the term $\omega$ and the two views on its nature allow us to understand such relationships, given that a classical ontic representation is defined by the commutation of diagram \ref{diagram}, and each view of noncommutativity explains the origin of these non-classical notions.

As shown in Ref. \cite{Spekkens2008}, the necessity of signed measures for all ontic representations is equivalent to the violation of the noncontextual conditions, thus implying the existence of non-trivial phases in the valuation maps. It is in the correction term, which arises from the curvature in Schrödinger's view, that codifies the negative part of the valuation function. Heisenberg's view explains why we cannot access negative probabilities. When translated into Heisenberg's view, the negative values of the valuation become topological failures; therefore, they are not within the set of physical processes. In the geometrical view, a classical model cannot have non-trivial curvature to correct the valuation, and any theory that exhibits such curvature for all ontic representation cannot be represented by a classical model. Since the necessity of negative values implies the existence of curvature that does not preserve the valuation function, we conclude that it characterizes the contextuality of the theory.

The embedding of the model into a classical GPT is equivalent to an ontic representation. In GPTs, such an embedding is a set of linear maps, each mapping to a set of objects that form the processes, such that the valuation is preserved. One way to determine if a theory is contextual is to verify that for every embedding in any classical GPT, a valuation map is not preserved. Another way is to show that if the map preserves the valuation functions, then for any embedding, at least one set of objects will be larger than the respective set of objects in the classical model. In the topological view, a classical model exhibits no topological failure, thus monodromy is impossible, and a theory with monodromy for all ontic representations cannot be represented by a classical model. The impossibility of embedding while preserving the valuation function implies the expression of these failures by imposing the valuation on them, which also characterizes contextual behavior.

All these non-classical notions are just different ways to explain to our classical eyes what contextuality is. They are just different representations of this phenomenon.

\subsection{Generalizing Vorob'ev Theorem}
\label{5.4}

For measurement contextuality satisfying outcome-determinism and no-disturbance, the measurements can be represented by a hypergraph of compatibility, where measurements are vertices and the contexts with mutually compatible measurements are the hyperedges connecting them\footnote{See Ref. \cite{Sidiney_2021} and references therein for the construction of these scenarios.}. Vorob'ev theorem \cite{Vorobyev_1962} imposes conditions for the hypergraph of compatibility to only describe noncontextual models. In the language of the hypergraph of compatibility, it takes the following form:

\begin{theorem}
 A hypergraph of compatibility $\left(\mathbf{M},\mathcal{C}\right)$ has only noncontextual behavior if and only if it is acyclic, which means it can be reduced to the empty set by the operations known as Graham's reduction:
\begin{itemize}
\item if $m\in C$ belongs to only one hyperedge, then delete $m$;
\item if $C \subsetneq C'$, with $C, C' \in\mathcal{C}$, then delete $C$.
\end{itemize}
\end{theorem}

We can rethink what these operations mean for the structure of the measurement's effects. Remembering that we are considering each context with a finite set of outcomes. Let us also assume, without loss of generality, that the set of outcomes is the same for the measurements. A context is a $\sigma$-algebra of local events, with the elements of the $\sigma$-algebra having the structure of a set of effects of a classical GPT, thus a Boolean structure.

The separation into contexts and outcomes turns the set of effects in a fiber bundle with a base set given by the compatibility hypergraph and the fibers as a finite $\sigma$-algebra. Vorob'ev theorem being restricted to the compatibility hypergraph is equivalent to identifying that the Boolean structure of the fibers in the bundle does not interfere with a necessary and sufficient condition for always having noncontextuality. Even as a property projected onto the base set, it is still a property of the total set, the set of effects.

The projection maps the $\sigma$-algebra of outcomes of a context to the $\sigma$-algebra given by the context and its subcontexts. The measurements are nothing more than the atoms of this resulting structure. With this fact in mind, it is easy to rewrite Graham's reduction:
\begin{itemize}
\item if an atom is just in one $\sigma$-algebra, we can ignore it.
\item if a sub-$\sigma$-algebra is proper, we can simplify it as a trivial $\sigma$-algebra, consequently coarse-graining the $\sigma$-algebra which contains it.
\end{itemize}
We can initiate this version of Graham's reduction without performing the projection, but by considering the outcomes of each measurement as the atoms. The projection would follow from the reduction itself, specifically from the second item, where the proper sub-$\sigma$-algebra is erased.

Measurement contextuality follows from a loop $\gamma$ in the set of effects. As any loop in a Boolean structure has a trivial contextual connection when viewed as a classical GPT, $\omega=0$, only loops defined through different Boolean structures can show any contextuality. Vorob'ev result identifies the fact that without such loops, no contextual behavior appears. For the argument to work, we need to make the dependence on contexts explicit even in the set of effects, placing us in Heisenberg's view, therefore $F=d\omega =0$.

\begin{theorem}[Generalized Vorob'ev]
A ontic representation is noncontextual if its first de Rham cohomological group is trivial and $F=0$.
\end{theorem}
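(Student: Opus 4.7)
The plan is to work entirely in the topological view, so that by the preceding theorem the curvature vanishes, $F=0$, which forces each ontic 1-form in the decomposition $\xi_{\lambda'}=dc_{\lambda'}+\omega_{\lambda'}$ (and analogously for $\mu_{\lambda}$ and $\Gamma_{\lambda'\lambda}$) to be closed. The exact potential piece $dc_{\lambda'}$ contributes no phase on any loop, so the whole contextual phase of an operational-equivalence loop is carried by the cohomology class $[\omega_{\lambda'}]\in H^{1}$; showing triviality of $H^{1}$ kills that class, and with it every phase.

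First I would fix one valuation, say $\xi_{\lambda'}$ on $\mathcal{E}$, and identify each operational equivalence (\ref{baseeffects}) with a discrete $1$-cycle $\gamma^{(\beta)}=\sum_{r}b_{r}^{(\beta)}\ket{E_{r}}$ in the piecewise linear manifold, whose phase is $\phi^{(\beta)}=\bra{\omega_{\lambda'}}\ket{\gamma^{(\beta)}}$. Invoking the hypothesis $H^{1}=0$ together with the discrete Poincar\'e lemma, the closed form $\omega_{\lambda'}$ is exact, $\omega_{\lambda'}=d\psi_{\lambda'}$, and the generalized Stokes theorem gives
\begin{equation}
    \phi^{(\beta)}=\bra{d\psi_{\lambda'}}\ket{\gamma^{(\beta)}}=\bra{\psi_{\lambda'}}\ket{\partial\gamma^{(\beta)}}=0,
\end{equation}
since $\partial\gamma^{(\beta)}=0$. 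A dual formulation uses $H_{1}=0$: every cycle bounds, $\gamma^{(\beta)}=\partial S$, so that $\phi^{(\beta)}=\bra{\omega_{\lambda'}}\ket{\partial S}=\bra{d\omega_{\lambda'}}\ket{S}=0$ already from $F=0$. Repeating the argument for $\mu_{\lambda}$ on $\mathcal{P}$ and $\Gamma_{\lambda'\lambda}$ on $\mathcal{T}$ then restores the three non-contextuality conditions (\ref{NCstates}), (\ref{NCeffects}) and (\ref{NCtransformations}) for every $\lambda$ and $\lambda'$, and the model is non-contextual.

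The main obstacle I anticipate is conceptual rather than computational: carefully fixing a discrete de Rham theory on the piecewise linear manifolds $\mathcal{E}$, $\mathcal{P}$ and $\mathcal{T}$ so that ``$H^{1}$ trivial'' has an unambiguous meaning compatible with the operational-equivalence loops, and justifying the version of the Poincar\'e lemma being used in that discrete setting. A secondary subtlety, worth flagging, is that only the ``if'' direction is being asserted; a genuine converse would require every nontrivial class in $H^{1}$ to be witnessed by some operationally accessible equivalence, which is not automatic, since the operational-equivalence loops need not generate all of $H_{1}$.
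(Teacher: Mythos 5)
Your argument is correct and follows essentially the same route as the paper's own (much terser) proof: impose $F=0$ to land in the topological view, use triviality of $H^{1}$ to conclude the closed correction $\omega$ is exact, and hence that it contributes no phase on any operational-equivalence cycle. One minor remark: invoking a ``discrete Poincar\'e lemma'' alongside $H^{1}=0$ is redundant, since closed-implies-exact is precisely what $H^{1}=0$ asserts; your closing caveat that only the ``if'' direction holds matches the paper's own observation that non-trivial $H^{1}$ does not characterize contextuality.
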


\begin{proof}
Once $F = 0$, we need to use the topological view, and contextuality will appear as topological failures that cause the monodromy of the probabilistic valuation. If the first cohomological group is trivial, then there are no differential forms that capture these failures, thus $\omega = 0$. Therefore, the ontic representation satisfies the noncontextuality condition.
\end{proof}

To identify the unavoidable contextual behavior of a model, we must exhaust all possible ontic representations. In other words, all possible valuation $1$-forms indexed by the ontic variables must present a cohomological obstruction. For the case of outcome-determinism, we have a fixed ontic representation, as we discussed earlier, which allows us to apply the previous theorem and obtain the noncontextuality of the model even before knowing the valuation functions. Acyclicity is a special case in which, still in the compatibility hypergraph, we can identify that there is no possibility of there being a loop that allows the expression of contextual behavior.

An important point is that $H^{1}$ is defined by the effects, not measurements. Thus, the intuition that the first homology groups of the compatible hypergraph must be non-trivial to show contextuality is false, even though the topological view holds that $\phi=\bra{\xi_{\lambda'}}\ket{\gamma}\neq 0$ is a failure also detected by $H_{1}$. For discussion and counterexamples, see Ref. \cite{Sidiney_2021}. A study on the topological expression of Vorob'ev's theorem will be conducted in a future work.

\subsection{Disturbance and Transition Functions}
\label{5.5}

Disturbing models for measurement contextuality are not standard in almost all frameworks. A well-known exception is the Contextuality by Default approach \cite{Dzhafarov2015ContextualitybyDefaultAB}. Disturbance is necessary for describing experimental applications. One way to indirectly address them is to modify the scenario \cite{Amaral_2019}, which seems to reflect William James's point of view on contradiction: when encountering a contradiction, make a distinction. The idea is to make the contradiction that disturbance represents explicit in the scenario by adding new maximal contexts representing the disturbing intersections. Due to its nature of being a discordance in the intersections between contexts, it is natural to relate disturbance to transition maps in a suitable approach where the latter are defined, such as in the bundle approach \cite{Sidiney_2021}.

A measurement is given by its effects, seen as possible events, not necessarily deterministic. In the deterministic case, the set of effects has a natural $\sigma$-algebra structure. Seen in this way, no-disturbance is exactly the triviality of the transition maps on intersections of $\sigma$-algebras. The set of measurements covers the set of effects and can be seen as charts of an atlas. Each deterministic measurement is a classical GPT in itself, or in other words, there exists an embedding of it as an entire classical GPT.

The geometrical view, where the contextual phases follow from holonomy, provides a direct approach to deal with disturbance, given that holonomy can be encoded in an element of a commutative group, the group of automorphisms of $\mathbb{R}$. For the $1$-form $\omega$, one can express such a phase as stated in \cite{Holonomyvideo}
\begin{equation}
    Hol(\partial S)=exp\left(\bra{\omega}\ket{\partial S}\right)
\end{equation}
where we identify $\bra{\omega}\ket{\partial S}$ as an element of the Lie algebra of the Lie group of transformations of $\mathbb{R}$. For discrete cases, one can embed the group in the Lie group of transformations of $\mathbb{R}$ and the same argument follows. As a commutative group, one only needs to track each chart, here $\sigma$-algebra,
\begin{equation}
    Hol(\partial S)=exp\left[\sum_{r}\bra{\omega}\left(b_{r}\ket{E_{r}}\right)\right].
\end{equation}

Let's consider the disturbing case now. Due to the presence of disturbance, it is not possible to have a global ontic representation, as it erases disturbance since classical GPT is non-disturbing. It can only be done locally for each chart. Similar to in differential geometry, the non-triviality between intersections defines the transition maps $t_{r,r'}$, with the indices $r$ and $r'$ denoting the charts, which lead from one chart to another. The holonomy transformation can then be calculated, and it will be given by
\begin{equation}
    Hol(\partial S)=\prod_{r}exp\left[\bra{\omega}\left(b_{r}\ket{E_{r}}\right)\right]\prod_{r}t_{r,r-1},
\end{equation}
where commutativity was used to rearrange and combine the transition maps. They can be put in the Lie algebra form, $t_{r,r-1}=exp(\eta_{r,r-1})$, to rewrite the holonomy term as
\begin{equation}
    \sum_{r}\bra{\omega}\left(b_{r}\ket{E_{r}}\right)+\sum_{r}\eta_{r,r-1}.
\end{equation}

What the transition map is doing in the geometrical view is taking one classical GPT into another, carrying the effects on themselves, but changing the valuation. This means that just as in contextual behavior, a correction term that is sensitive to chart changes should be included in the valuation. For this adjustment, we can define a $1$-form $\ket{\eta}$ that satisfies $\eta_{r,r-1}=\bra{\eta}\left(b_{r}\ket{E_{r}}\right)$, and rewrite the valuation function as
\begin{equation}
    \xi=dc+\omega+\eta.
\end{equation}
The first term values the global contribution, the second term values dependencies on parallel paths, while the third term values changes in an effect when transitioning from one chart to another.

The disturbance form $\eta$ is on the same footing as contextuality, which is not surprising, as charts are nothing more than contexts. In a certain sense, we have that it is the same phenomenon, the dependence on contexts of the valuation. The difference is that $\omega$ deals with paths, while $\eta$ is pointwise on the effects. For this reason, just as the $1$-form $\omega$ depends on the ontic representation to be defined, the $1$-form $\eta$ depends on the atlas to be defined.

The explicit construction of examples of this formalism to deal with disturbance as transition maps will be done in a subsequent work. However, we can already identify its relationship with extended contextuality \cite{Amaral_2019}. Adding contexts and duplicating measurements that show disturbance is nothing more than turning the form $\eta$ into an effective form $\omega$ of an ontic representation without disturbance by making effects on the disturbed intersection into paths. Thus, the disturbance becomes contextuality. It can explain the fact that disturbance consumes contextual behavior, as already noted in the nonlocality literature \cite{Abramsky_2014,Blasiak_2021}.

In the same way as in contextual fraction, we can think of an analogous quantifier for disturbance. Let's suppose here that we already have the valuation $\xi$ defined in a fixed ontic representation, just as in the case of contextual fraction, which implies that we already have $c$, $\omega$, and $\eta$. The disturbance fraction (DF) will be given by
\begin{equation}    \text{DF}=\sum_{r}\int_{\Lambda}d\mu(\lambda)\bra{\eta_{\lambda}}\ket{E_{r}}
\end{equation}
through reasoning analogous to that of contextual fraction. Similarly, an induced maximally disturbing model is given by
\begin{equation}
    p_{D}(E)=\frac{\int_{\Lambda}d\mu(\lambda)\bra{\eta_{\lambda}}\ket{E}}{\sum_{r}\int_{\Lambda}d\mu(\lambda)\bra{\eta_{\lambda}}\ket{E_{r}}},
\end{equation}
with $p=(\text{NCF})p_{NC}+(\text{CF})p_{SC}+(\text{DF})p_{D}$, where $p_{D}(E)$ concentrates all the disturbance. The conditions for the explicit construction of this disturbance fraction, as well as its relation to other proposals in the literature \cite{valle2023corrected}, will be addressed in a subsequent work.

\subsection{Contextuality in Interpretations of Quantum Theory}
\label{5.6}

An interpretation of quantum theory does not aim solely to create models for isolated processes, but rather to create a consistent framework for the entire quantum theory. The goal is to present an ontic structure and probabilistic valuation functions capable of explaining quantum phenomena, even if some properties understood as classical need to be violated.

At the core of all interpretations lies the measurement problem, the fact that a measurement does not follow from the standard evolution of the theory. This problem arises from the incompatibility of measurements and the impossibilities they generate. Contextuality is at the root of the measurement problem; indeed, it is the phenomenon that prevents a classical ontology, and it is the phenomenon that interpretations indirectly deal with.

The number of interpretations grows exponentially, and it is beyond the scope of this work to address them individually. We will use the classification by Cabello \cite{Cabello_2017}, exploring some specific examples a bit more deeply. Our goal is to try to identify how the interpretations exhibit the correction of contextuality $\omega$. Their incompatibilities stem from choosing different ontic representations, or even where contextuality appears, following one of the views presented in section \ref{4}.

In \cite{Cabello_2017}, we have two types of interpretations, denoted as Type-I and Type-II. The Type-I interpretations presume an intrinsic realism, that is, they seek an ontological foundation. Within Type-I, there is a further distinction concerning the ontological value of the quantum state. The $\psi$-Ontic interpretations assign ontological value to the states. We will explore two well-known examples.

\begin{example}
    Bohmian mechanics \cite{Bohm1,Bohm2,BohmStanford} generates an interesting interpretation in which we add a purely nonlocal term to classical deterministic dynamics. The quantum state is taken as ontology and broken down into its local classical and purely nonlocal quantum parts, which ``pilot'' classical objects.

    The embedding of the theory into a classical framework is exact, with the correction given by the influence that the global part of the ontology exerts on the local classical part. This correction alters the valuation that an agent has access to, which is the restriction of classical determinism with dependence on global properties.

    The encoding in classical ontology shows that Bohmian mechanics utilizes Schrödinger's view. Describing an extra term to correct classical behavior, and that this term is nonlocal, embodies the correction that quantum contextual behavior imposes.
\end{example}

\begin{example}
    The Many Worlds Interpretation \cite{MW,MWStanford} takes the quantum state as the ontological object. Reality would be entirely described by a quantum state that evolves unitarily. Each state is possible as a classical reality, each possibility is a reality in some world.

    We immerse quantum theory in a classical theory, since we give reality to all states in a multiverse. The difficulty lies in retrieving the quantum valuation rule once the worlds are taken as real and can be treated as classical. This imposes restrictions on the classical distribution that introduce correlations between worlds, thus modifying the valuation function to fit the Born rule.

    The Many Worlds Interpretation is expressed through Schrödinger's view, which is natural since this interpretation follows from assuming the absolute reality of Schrödinger's formalism.
\end{example}

In Type-I interpretations, those that are not $\psi$-Ontic are $\psi$-Epistemic. Such interpretations remove the ontological value of the quantum state, proposing an ontology that completes this state, resulting in an epistemic value for it. Let's explore an example for this case.

\begin{example}
    Consistent Histories Interpretation \cite{CH,CHStanford} treats events in a classical manner. Quantum theory would be given through a stochastic process over events, forming histories. It is in the valuation that we have the expression of purely quantum behavior, with the existence of incompatible histories. To make a measurement, the agent needs to choose a set of compatible histories, called a framework, where the Kolmogorov axioms for classical probability hold.

    Thought of as an ontic representation of quantum theory, Consistent Histories Interpretation embeds the objects of processes in a classical structure. Classical valuation undergoes modification akin to that presented in quantum measure theory, altering stochastic behavior beyond the classical. A study of quantum theory in this interpretation seeks means of recovering quantum statistics through physical principles in valuation functions, since the events are already fixed.

    This is an interpretation that explicitly employs Schrödinger's view, with contextuality expressed in the geometric part given by the valuation of the ontic representation.
\end{example}

Type-II interpretations deal with a participatory realism, where the information extracted from a system is not intrinsic to it, but rather a result of its relationship with the observer. They also have a distinction into two types, but of a more epistemological nature. The Type-II interpretations that deal with knowledge are those that treat the quantum state as an object that encodes the observer's knowledge. Let's look at an example.

\begin{example}
    Relational interpretation \cite{Rovelli_1996,RQM} proposes that ontology lies not in states, but in a relational structure between agents and between processes. It assumes that state concerns the knowledge of an agent and is no longer fundamental but rather local. Consistency in this relational structure arises from an agent having restricted knowledge of the process when positioned as an observer within a system. For a third party, the system's evolution alongside the agent would still be the usual unitary evolution of quantum theory.

    There is no completion of the theory, and probabilities are treated as classical. What is lost is the absoluteness of the quantum state for each process as seen by each agent, yet with agreement enforced by an underlying fundamental structure. There are no local contradictions since the measurement is classical, but only when comparing such local measurements.

    This identifies this interpretation as utilizing Heisenberg's view, but concealing within the breakdown of the global into local the topological flaws that contextuality identifies.
\end{example}

The Type-II interpretations that deal with belief treat the quantum state not as knowledge but as an object that encodes the agent's expectation. Let's explore an example of this type of interpretation.

\begin{example}
    Quantum Bayesianism \cite{QB,QBStanford} takes a step further and presents an interpretation that avoids ontology. Agents' beliefs are optimized by imposing the Born rule valuation on the quantum state that represents those beliefs. This explicit non-realism positions the Born rule merely as a function that selects certain events, which only hold if they can be experimentally accessible. This highlights complete freedom regarding the structure of events.

    It is within events that the expression of contextuality occurs, as valuations are merely classical mechanisms that an agent employs. We are adopting Heisenberg's view, in a position where no ontology holds any meaning other than that of mere representation.
\end{example}

The examples in this section make explicit the relationship that the views presented in section \ref{4} have with the interpretations of a non-classical theory. Schrödinger's view is the intrinsic-realistic treatment of the model, with ontology being represented classically, and with all modifications occurring in the valuation of such ontic variables. It is in the geometry of this valuation that contextual behavior is found, usually as a correction term when forcing measurements outside the classical scope. Heisenberg's view is the participatory-realistic treatment, presupposing that ontology is not classical due to flaws in accessible propositions. These are ``holes'' in reality, and it can only be seen in pieces, by a covering of classical pieces. It is the identification of these ``holes'' as topological failures that allows contextuality to appear in valuation even if it is classical.

\section{Discussion}
\label{6}

In this work, we looked for an alternative description of contextuality in the generalized approach by constructing a representation of contextuality as a differential geometry problem.

The approach we used involved identifying the operational equivalences as discrete loops in the vector space where the objects are represented, and contextuality for them as the non-preservation of such discrete loops by the valuation functions. This identification allows the use of discrete differential geometry to deal with contextuality and naturally extends to non-discrete cases.

There are two different ways in this approach to understand contextual behavior, different views of contextuality. In the first case, the classical ontic representation is imposed, which implies the existence of a correction to the valuation function. In the second case, the classical probabilistic valuation is imposed, thus forbidding a correction without a fundamental cause, the non-triviality of the topology of the set of objects itself. Both notions are equivalent, and they are expressed in other concepts of non-classicality: contextual fraction, interference, signed measures, and non-embeddability. Rethinking contextuality of an ontic representation allows us to rethink and generalize the Vorob'ev theorem, and also gives us a natural way to deal with models that violate no-disturbance.

Contextuality is more than just topological \cite{Mansfield2020contextualityis}, even if it can be expressed as such if we assume Heisenberg's view. It is a higher-level phenomenon than that expressed in representations and interpretations. The choice of how to bring such a phenomenon to this level can be related to notions of realism. For example \cite{realismvideo}, if we consider Fixed Realism, where there is one model, and the real is what is true in it, then we impose that contextuality is no longer topological, but geometric as in Schrödinger's view. Even if we consider Covariant Realism, where we have equivalent models, and the real is how things change when one changes the model, still the global realism would impose a view with correction in the valuation, and not in the events. Now, if we consider Local Realism, where there are nonequivalent models, and the real is how to handle disagreement, then we can use this disagreement to bring contextuality to be encoded at the level of events, allowing the valuation not to be modified.

In the first two cases of realism, Fixed and Covariant, we have an intrinsic-realistic view, where what matters is the reality itself and how we see it. Examples of interpretations of quantum theory that are of this type are Many Worlds, Bohmian Mechanics, and Consistent Histories, with the latter being explicitly of the second case. The third case, Local Realism, is a participatory-realistic view, where the most important thing is a pragmatic description of what is observed. Contextuality is generally presented in this way, in its nonlocality version. The topological view explains that non-trivial cohomology is a signal of disagreement, in this case, contextual behavior. It is a matter of pragmatism, where if we accept the existence of new hidden features, we can recover the intrinsic-realistic \cite{rovelli2021}.

Some open questions for future exploration naturally follow. The exploration of the relationship between the curvature of the sets of states, effects and transformations that naturally appear in quantum theory, and the curvature in Schrödinger's view is an important step to make quantum contextuality explicit. More generally, a deep exploration of models and interpretations, with the explicit construction of their connection and curvature, or their topological failures. The importance of this exploration goes beyond the foundations of non-classical theories and their applications, as it would elucidate the main resource to be explored in emerging technologies. To make contact with such technologies, a next step would be to explore how the classical limit would be expressed with the approach presented here, and its relationship with mechanisms that erase the contextual connection. Furthermore, as disturbance is natural in experimental issues, the identification of transition maps as representations of disturbance is a way to bring the formalism of contextuality to such issues. And the identification also points to the exploration of higher holonomies and their possible relationship with the contextuality of higher-level processes. An explicit construction of the bundles, especially with the same language used in other areas of physics like field theory, would enable a greater understanding of the phenomenon of contextuality. In contact with areas of computation, an explicit construction of a topological characterization of the Vorob'ev theorem would also provide a more intuitive view of it.

If the contextuality presented by certain processes causes discomfort when thought of as something about the questions we can empirically ask, questions that we unjustifiably assume exist, we can change the point of view and faithfully represent the same phenomenon with all the questions we assume but with a correction in the answers. It's another lesson in humility that nature gives us, but also of our inventiveness. We can't ask the questions we want, we don't have the power to force it into an interrogation. We only receive answers that we are ready to receive, and only what it allows us to access. However, we can interpret the answers and represent the reality in which we live. Our confusion over quantum theory and its contextual behavior maybe stems more from our arrogance in forcing ourselves into the first case than our wisdom in adapting to the second case.

\begin{acknowledgments}
The author thanks the MathFoundQ – UNICAMP – Mathematical Foundations of Quantum Theory Group, in special Prof. Dr. Marcelo Terra Cunha.

This study was financed in part by the Coordenação de Aperfeiçoamento de Pessoal de Nível Superior - Brasil (CAPES) - Finance Code 001.
\end{acknowledgments}

\nocite{*}
\raggedright
\bibliography{aipsamp}

\end{document}